\begin{document}
	
	\title{From de Bruijn graphs to variation graphs 
		-- relationships between pangenome models}
	
	\author{Adam Cicherski \orcidID{0009-0007-5707-8967} \and Norbert Dojer\orcidID{0000-0001-5653-1167}}
	
	\institute{University of Warsaw, Warsaw, Poland}
	
	\maketitle
	
	\begin{abstract}
		Pangenomes serve as a framework for 
		joint analysis of genomes of related organisms.
		Several pangenome models were proposed, 
		offering different functionalities, 
		applications provided by available tools, their efficiency etc.
		Among them, two graph-based models
		are particularly widely used:
		variation graphs and de Bruijn graphs.
		
		In the current paper we propose an axiomatization
		of the desirable properties
		of a graph representation of a collection of strings.

		We show the relationship between 
		variation graphs satisfying these criteria 
		and de Bruijn graphs.
		This relationship can be used to 
		efficiently build a variation graph 
		representing a given set of genomes,
		transfer annotations between both models,
		compare the results of analyzes based on
		each model etc.
		
		\keywords{pangenome \and de Bruijn graphs \and variation graphs.}
	\end{abstract}
	
	\section{Introduction}

	The term pangenome was initially proposed as a single data structure 
	for joint analysis of a group of bacterial genes \cite{tettelin_2005}. 
	In the presence of a variety of whole genome sequences available it has evolved 
	and currently it refers to a model of joint analysis of genomes of related organisms. 
	Numerous pangenome models were proposed, ranging from collections of unaligned sequences 
	to sophisticated models that require complex preprocessing of sequence data 
	\cite{computationalpangenomicsconsortium_2018,eizenga_2020,paten_2017}.
	The models significantly differ in their properties, 
	applications provided by available tools, their efficiency, 
	and even the level of precision of the definition of the optimal graph representing given genomes.

	De Bruijn graphs (dBG) have nodes uniquely labeled with
	$k$-mers and edges representing their overlaps of length $k-1$. 
	Since the structure of a dBG for a given set of genomes 
	is strictly determined by the parameter $k$,
	its construction is straightforward and may be performed in linear time. 
	Applications of dBG-based pangenome models include read mapping
	\cite{heydari_2018,limasset_2016}, variant calling \cite{iqbal_2012}, 
	taxonomic classification and quantification of metagenomic sequencing data \cite{schaeffer_2017} 
	and even pangenome-wide association studies \cite{manuweera_2019}.
	
	Variation graphs (VG) have nodes labeled with DNA sequences of arbitrary length.
	Concatenations of labels of consecutive nodes on paths
	form represented genomic sequences.
	Such structure allows to avoid the redundancy of the dBG representation,
	where a single residuum occurs in labels of several nodes. 
	Consequently, this model provides an intuitive common coordinate system,
	making it convenient to annotate, 
	which is crucial in many applications.
	The basis for a wide spectrum of VG-based analyzes was provided by indexing methods 
	for efficient subsequence representation and searching
	\cite{durbin_2014,sirn_2014,sirn_2017}.
	Then sequencing read mapping algorithms were proposed
	\cite{garrison_2018,rautiainen_2020}, 
	opening the door for variant calling and genotyping tools
	\cite{eggertsson_2017,eggertsson_2019,hickey_2020}. 
	Moreover, variation graphs were applied to the inference of precise haplotypes 
	\cite{baaijens_2019,baaijens_2019a}
	and genome graph-based peak calling \cite{grytten_2019}.
	
	The construction of VG models is more computationally resource-intensive
	than the construction of dBG. 
	Given 100 human genomes, the respective
	dBG can be built in a few hours \cite{minkin_2017,yu_2022}, 
	while the construction of VG requires several days
	\cite{garrison_2018,garrison_2023,hickey_2022,li_2020}. 
	Moreover, there are many possible variation graphs 
	that represent a particular collection of genomes. 
	Two completely uninformative extremes are: 
	an empty graph with each node labeled with one of the genome sequences, 
	and a graph with 4 nodes labeled with single letters A, C, G, T. 
	Existing tools tend to strike a balance between these extremes, 
	but to the best of our knowledge, neither minimum requirements nor global optimization criteria 
	have been proposed for variation graphs.
	
	The aim of the current paper is to fill this gap.
	We propose an axiomatization of the desired properties of variation graphs. 

	Moreover, we show the relationships between 
	VGs satisfying proposed criteria and related dBGs. 
	Finally, we show how these relationships may be used 
	to transform de Bruijn graphs into variation graphs.
	
	\section{Representing string collections with graphs}
	
	In this section we propose the concept of string graph 
	-- a common abstraction of de Bruijn and variation graphs.
	Then we formalize the notion of a graph representing given set of strings
	and define postulated properties of such representations.
	Finally, we show that these properties are always satisfied in de Bruijn graphs and impose strict constraints on the structure of variation graphs.

	\subsection{String graphs}
	
	A \emph{string graph} is a tuple $G=\langle V, E, l \rangle$, where:
	\begin{itemize}
		\item $V$ is a set of vertices,
		\item $E\subseteq V^2$ is a set of directed edges,
		\item $l: V \to \Sigma^+$ is a function labeling
		vertices with non-empty strings over alphabet $\Sigma$.
	\end{itemize}

	A \emph{path} in a string graph is a sequence of vertices $\langle v_1,\ldots,v_k\rangle$
	such that $\langle v_{i}, v_{i+1}\rangle \in E$ for every $i\in\{1,\ldots,k-1\}$.
	The set of all paths in $G$ will be denoted by $\mathcal{P}(G)$.
	Given path $p=\langle v_1,\ldots,v_k\rangle$, 
	the set of \emph{intervals} of $p$ is defined by formula
	$Int(p)=\{\langle i,j\rangle\;|\;1\le i\le j\le k\}$.
	Given two intervals 
	$\langle i,j\rangle,\langle i',j'\rangle\in Int(p)$,
	we say that $\langle i,j\rangle$ is 
	a \emph{subinterval} of $\langle i',j'\rangle$
	if $i\ge i'$ and $j\le j'$.
	A \emph{subpath} of $p$ defined on interval $\langle i,j\rangle$ 
	is a path $p[i..j]=\langle v_i,\ldots,v_j\rangle$.
	We use similar terminology and notation for strings:
	given string $S$, the set of its \emph{intervals} 
	is denoted by $Int(S)$,
	and $S[i..j]$ denotes the substring of $S$ indicated by interval $\langle i,j\rangle$.
	
	In order to represent strings longer than labels of single vertices,
	the labeling function $l$ must be extendable to function $\hat{l}$ defined on paths.
	The extension should be \emph{subpath-compatible} in the following sense:
	every path $p$ should induce an injective function $\Psi_p:Int(p)\to Int(\hat{l}(p))$
	satisfying the condidtion
	$$\Psi_p(i,j)=\langle i',j'\rangle \Rightarrow \hat{l}(p[i..j])=\hat{l}(p)[i'..j'].$$
	In the following subsection we use this concept to introduce 
	the definition of a graph representing a set of strings 
	and formulate appropriate properties of such representation.
	In subsequent subsections we describe two different its implementations
	that are realized in de Bruijn graphs and variation graphs, respectively.

	\subsection{Representations of collections of strings}\label{representation}
	
	Given a set of strings  $\mathcal{S}=\{S_1,\ldots,S_n\}$, 
	a string graph $G$ with subpath-compatible labeling extension $\hat{l}$ 
	and $\pi: \mathcal{S}\to\mathcal{P}(G)$,
	we say that $\langle G, \pi\rangle$ is a \emph{representation} of $\mathcal{S}$ 
	iff the following conditions are satisfied:
	\begin{itemize}
		\item $\hat{l}(\pi(S_i))=S_i$ for every $i\in\{1,\ldots,n\}$,
		\item every vertex in $G$ occurs in some path $\pi(S_i)$,
		\item every edge in $G$ joins two consecutive vertices in some path $\pi(S_i)$.
	\end{itemize}
	
	We define the set of \emph{positions} in $\pi$ as
	$Pos(\pi)=\{\langle i,j\rangle \;|\; 1\le i\le n \wedge 1\le j\le|\pi(S_i)|\}$.
	The set of \emph{$\pi$-occurrences} of a vertex $v$ 
	is defined as $Occ_{\pi}(v)=\{\langle i,j\rangle\in Pos(\pi) | \pi(S_i)[j]=v\}$. 
	The injections $\Psi_{\pi(S_i)}:Int(\pi(S_i))\to Int(S_i)$ 
	will be denoted by $\Psi_i$ for short.

	Let $S_i,S_{i'}$ be two (not necessarily different) strings from $\mathcal{S}$ and assume that $S_i[p..p+k-1]=S_{i'}[p'..p'+k-1]$ is a common $k$-mer of $S_i$ and $S_{i'}$. 
	We say that this common $k$-mer is \emph{reflected} by a common subpath 
	$\pi(S_i)[q..q+m]=\pi(S_{i'})[q'..q'+m]$ 
	of $\pi(S_i)$ and $\pi(S_{i'})$ iff
	$\langle p,p+k-1\rangle$ is a subinterval of $\Psi_i(q,q+m)$ and
	$\langle p',p'+k-1\rangle$ is a subinterval of $\Psi_{i'}(q',q'+m)$.

	We say that $\langle G, \pi\rangle$ represents $\mathcal{S}$ 
	\emph{$k$-completely} iff  
	all common $k$-mers in $\mathcal{S}$
	are reflected by respective subpaths.
	
	We say the pair of $\pi$-occurrences $\langle i,j\rangle, \langle i',j'\rangle$ of a vertex $v$ is:
	\begin{itemize}
		\item \emph{directly $k$-extendable} iff
		$\pi(S_i)[j-m..j+m']=\pi(S_{i'})[j'-m..j'+m']$
		for $m,m'\ge0$ satisfying $|\hat{l}(\pi(S_i)[j-m..j+m'])|\ge k$, 
		i.e. these occurrences extend to intervals of $\pi(S_i)$ and $\pi(S_{i'})$, respectively,
		indicating their common subpath
		labeled with a string of length $\ge k$, 
		\item \emph{$k$-extendable} if there is a sequence of occurrences of $v$
		that starts from $\langle i,j\rangle$, ends at $\langle i',j'\rangle$ 
		and each two consecutive occurrences in that sequence 
		are directly $k$-extendable.
	\end{itemize}
	We say that $\langle G, \pi\rangle$ represents $\mathcal{S}$ \emph{$k$-faithfully} 
	if every pair of occurrences of a vertex is $k$-extendable.
	
	Note that the $k$-completeness property specifies, 
	which fragments of $\mathcal{S}$-strings must be unified in the representation, 
	while $k$-faithfulness states 
	that we cannot unify anything that is not 
	a consequence of  $k$-completeness.
	Both properties 
	are illustrated in Fig. \ref{fig_k-prop}.
	
	\begin{figure}
		\centering
		\includegraphics [scale=0.5]{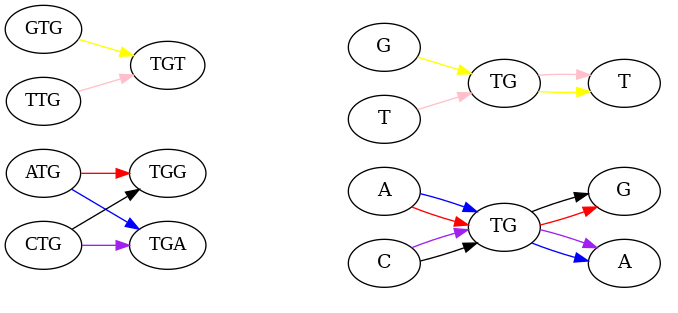}
		\caption{Two examples of a $3$-complete and $3$-faithful 
			representations of a set of strings
			$\{GTGT, TTGT, ATGG, ATGA, CTGG, CTGA \}$ - de Bruijn graph on the left and variation graph on the right.
			Different colors of edges correspond to different paths in $\pi$.  
			Note that in VG the common $2$-mer $TG$ from sequences  
			$ATGG$ (red), $ATGA$ (blue), $CTGG$ (black) and $CTGA$ (purple) 
			can be represented by a common vertex, 
			because its occurrences 
			on blue and red path can be extended to the common path labeled by  $3$-mer  $ATG$, 
			the occurrences on red and black path can be extended to a path labeled by $TGG$ 
			and the occurrences on black and purple can be extended to the path labeled by $CTG$. 
			On the other hand, 
			the occurrences of $2$-mer $TG$ in sequences $GTGT$ (yellow) and  $TTGT$ (pink) 
			cannot be represented by this vertex, 
			since its occurrences in yellow and pink paths 
			are not extendable to a path labeled by a common $3$-mer 
			with any other occurrence on the rest of the paths.}\label{fig_k-prop}
	\end{figure}

	\subsection{De Bruijn graphs}
	
	A \emph{de Bruijn graph} of length $k$
	is a string graph satisfying the following conditions:
	\begin{itemize}
		\item $|l(v)|=k$ for every $v\in V$,
		\item $l(v)=l(w)\Rightarrow v=w$ for all $v,w\in V$,
		\item $l(v)[2..k]=l(w)[1..k-1]$ for every $\langle v,w\rangle\in E$.
	\end{itemize} 
	In other words: vertices are labeled with unique $k$-mers 
	and edges connect vertices having labels overlapping 
	with $k-1$ characters.
	
	We define the extension $\hat{l}$ of the 
	de Bruijn graph labeling function in the following way:
	the labeling of a path $p=\langle v_1,\ldots,v_m\rangle$ is a concatenation 
	of the labels of its vertices with deduplicated overlaps, i.e. 
	$\hat{l}(p)=l(v_0)\cdot l(v_1)[k]\cdot\ldots\cdot l(v_m)[k]$.
	
	Consider $\langle i,j\rangle\in Int(p)$.
	The label of the subpath of $p$ indicated by $\langle i,j\rangle$
	satisfies the equation
	$\hat{l}(p[i..j])
	=\hat{l}(p)[i..j+k-1]$,
	so the function $\Psi_p(i,j)=\langle i,j+k-1\rangle$
	ensures subpath-compatibility.
	
	\begin{proposition}
		Given set of strings $\mathcal{S}=\{S_1,\ldots,S_n\}$ 
		such that $|S_i|\ge k$ for every $i\in\{1,\ldots,n\}$, 
		there is a unique up to isomorphism 
		representation of $\mathcal{S}$ by a de Bruijn graph of length $k$.
		Moreover this representation is $k$-complete and $k$-faithfull.
	\end{proposition}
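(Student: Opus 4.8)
The plan is to build the de Bruijn representation by hand, show that the de Bruijn axioms force this construction, and then read the two extension properties off the shape of the resulting paths. For the existence part, I would let $V$ be the set of $k$-mers that occur as substrings of the strings in $\mathcal{S}$, let $l$ be the identity on these $k$-mers, set $\pi(S_i)=\langle S_i[1..k],S_i[2..k+1],\ldots,S_i[|S_i|-k+1..|S_i|]\rangle$ (well defined and non-empty precisely because $|S_i|\ge k$), and take $E$ to be the set of pairs of vertices consecutive on some $\pi(S_i)$. All three de Bruijn conditions hold by construction; $\hat{l}(\pi(S_i))=S_i$ is immediate from the overlap-deduplicated definition of $\hat{l}$; and the remaining two representation conditions hold because each vertex and each edge was introduced to serve some $\pi(S_i)$.

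For uniqueness I would argue that in any de Bruijn representation $\langle G,\pi\rangle$ of $\mathcal{S}$ the path $\pi(S_i)=\langle v_1,\ldots,v_m\rangle$ must coincide with the window path above. Since $|l(v_t)|=k$ for all $t$, the formula for $\hat{l}$ gives $|\hat{l}(\pi(S_i))|=k+(m-1)$, so $\hat{l}(\pi(S_i))=S_i$ forces $m=|S_i|-k+1$; then $l(v_1)=S_i[1..k]$, and an induction using the edge-overlap condition $l(v_t)[2..k]=l(v_{t+1})[1..k-1]$ yields $l(v_t)=S_i[t..t+k-1]$ for every $t$. Injectivity of $l$ identifies each $v_t$ with the $k$-mer it spells, so $V$, $E$, $l$ and each $\pi(S_i)$ are determined up to renaming vertices by their labels, i.e. up to isomorphism of representations.

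Both extra properties then reduce to the identity $l(\pi(S_i)[t])=S_i[t..t+k-1]$ together with $\Psi_i(t,t)=\langle t,t+k-1\rangle$. For $k$-completeness: if $S_i[p..p+k-1]=S_{i'}[p'..p'+k-1]$, these windows are the labels of $\pi(S_i)[p]$ and $\pi(S_{i'})[p']$, so by injectivity these positions carry the same vertex $v$, and the one-vertex subpaths $\pi(S_i)[p..p]=\langle v\rangle=\pi(S_{i'})[p'..p']$ reflect the common $k$-mer, since $\langle p,p+k-1\rangle$ is trivially a subinterval of $\Psi_i(p,p)=\langle p,p+k-1\rangle$ and symmetrically for $i'$. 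For $k$-faithfulness: any two occurrences $\langle i,j\rangle,\langle i',j'\rangle$ of a vertex $v$ are already directly $k$-extendable with $m=m'=0$, because $\pi(S_i)[j..j]=\langle v\rangle=\pi(S_{i'})[j'..j']$ and $|\hat{l}(\langle v\rangle)|=|l(v)|=k$.

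None of these steps is deep. The only point that calls for care is the uniqueness argument: one must be precise about what an isomorphism of representations is and check that the edge-overlap axiom genuinely pins down every label $l(v_t)$ along $\pi(S_i)$; once that is done, $k$-completeness and $k$-faithfulness are essentially bookkeeping, helped by the fact that every vertex label already has length exactly $k$.
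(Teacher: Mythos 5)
Your proof is correct and follows essentially the same route as the paper's (much terser) argument: the vertex set is forced to be the set of $k$-mers of $\mathcal{S}$, label-preservation gives the isomorphism, uniqueness of labels gives $k$-completeness, and $k$-faithfulness is immediate because every single-vertex subpath already carries a label of length $k$. Your version simply spells out the window-path construction and the induction along the overlap condition that the paper leaves implicit.
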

	
	\begin{proof}
		The set of vertices of a dBG of length $k$ representing $\mathcal{S}$
		is actually determined by the set of different $k$-mers in $\mathcal{S}$.
		A mapping between the sets of vertices of two such graphs 
		that preserves the labels of those vertices, must be an isomorphism.
		Uniqueness of vertex labels implies also $k$-completeness.
		Finally, $k$-faithfulness is obvious from the fact 
		that the graph has no vertices with labels shorter than $k$.
	\end{proof}

	\subsection{Variation graphs}
		
	In \emph{variation graphs} 
	vertices may be labeled with strings of any length. 
	The extension of the labeling function to paths is defined as 
	the concatenation of the labels of consecutive vertices,
	i.e. $\hat{l}(p)=l(v_1)\cdot\ldots\cdot l(v_m)$
	for $p=\langle v_1,\ldots,v_m\rangle$.
	
	Consider a path $p=\langle v_1,\ldots,v_m\rangle$
	and let $s_i=\sum_{j=1}^{i}|l(v_i)|$ for $i\in\{0,\ldots,m\}$.
	The label of the subpath of $p$ indicated by $\langle i,j\rangle\in Int(p)$
	satisfies the equation
	$\hat{l}(p[i..j])
	=\hat{l}(p)[s_{i-1}+1..s_j]$,
	so the function $\Psi_p(i,j)=\langle s_{i-1}+1,s_j\rangle$
	ensures subpath-compatibility.
	When $|l(v)|=1$ for every vertex $v$, the graph is called \emph{singular}.
	In this case $Int(p)=Int(\hat{l}(p))$, $s_i=i$ for all $i\in\{0,\ldots,m\}$
	and the above formula simplifies to $\Psi_p(i,j)=\langle i,j\rangle$.
	
	Two representations of the set of strings $\mathcal{S}$ are \emph{equivalent} 
	iff they reflect exactly the same $m$-mers for all $m>0$.
	Below we show that $k$-completeness and $k$-faithfulness
	properties determine the structure 
	of the VG-representation up to equivalence.
	
	\begin{lemma}\label{lem:equiv}
		Assume that variation graph representations 
		$\langle G,\pi\rangle$ and $\langle G',\pi'\rangle$
		of a set of strings $\mathcal{S}=\{S_1,\ldots,S_n\}$
		are $k$-complete.
		Let $\langle i_1,j_1\rangle, \langle i_2,j_2\rangle$
		be a $k$-extendable pair of $\pi$-occurrences of a vertex $v$.
		Then every vertex $v'$ of $G'$ having 
		such $\pi'$-occurrence $\langle i_1,j_1'\rangle$
		that $S_{i_1}$-intervals 
		$\Psi_{\pi'(S_{i_1})}(j_1',j_1')$ and
		$\Psi_{\pi(S_{i_1})}(j_1,j_1)$ overlap,
		has also such $\pi'$-occurrence $\langle i_2,j_2'\rangle$
		that $S_{i_2}$-intervals 
		$\Psi_{\pi'(S_{i_2})}(j_2',j_2')$ and
		$\Psi_{\pi(S_{i_2})}(j_2,j_2)$ overlap.
	\end{lemma}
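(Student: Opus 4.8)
The plan is to reduce to a directly $k$-extendable pair, to turn the common subpath it supplies in $G$ into a long common substring of $S_{i_1}$ and $S_{i_2}$, to carry a suitable $k$-mer of that substring into $G'$ via $k$-completeness, and to read off the occurrence of $v'$ there. Since $\langle i_1,j_1\rangle,\langle i_2,j_2\rangle$ is $k$-extendable there is a chain of $\pi$-occurrences of $v$ joining them whose consecutive members are directly $k$-extendable, so proving the statement for one directly $k$-extendable pair and composing along the chain (feeding the conclusion of one step as the hypothesis of the next) reduces everything to that case. So assume the pair is directly $k$-extendable, witnessed by a common subpath $p_0=\pi(S_{i_1})[j_1-m..j_1+m']=\pi(S_{i_2})[j_2-m..j_2+m']$ with $|\hat l(p_0)|\ge k$. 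Put $L=\hat l(p_0)$, $[\alpha_1,\beta_1]=\Psi_{\pi(S_{i_1})}(j_1-m,\,j_1+m')$ and $[\alpha_2,\beta_2]=\Psi_{\pi(S_{i_2})}(j_2-m,\,j_2+m')$; subpath-compatibility gives $S_{i_1}[\alpha_1..\beta_1]=L=S_{i_2}[\alpha_2..\beta_2]$. Because the labelling extension of a variation graph makes $\Psi$ a prefix-sum of vertex-label lengths along a path, $v$ occupies the same offset $\sigma$ of $L$ on both sides: $T_1:=\Psi_{\pi(S_{i_1})}(j_1,j_1)=[\alpha_1+\sigma,\,\alpha_1+\sigma+|l(v)|-1]\subseteq[\alpha_1,\beta_1]$ and $T_2:=\Psi_{\pi(S_{i_2})}(j_2,j_2)=[\alpha_2+\sigma,\,\alpha_2+\sigma+|l(v)|-1]$, with $T_2=T_1+(\alpha_2-\alpha_1)$.

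Next, let $F_1=\Psi_{\pi'(S_{i_1})}(j_1',j_1')$ and pick $x\in F_1\cap T_1$ (nonempty by hypothesis); then $x\in[\alpha_1,\beta_1]$ and, as $|L|\ge k$, some length-$k$ window $[a,a+k-1]\subseteq[\alpha_1,\beta_1]$ contains $x$. Its shift $[a',a'+k-1]$ with $a'=a+(\alpha_2-\alpha_1)$ lies in $[\alpha_2,\beta_2]$, and $S_{i_1}[a..a+k-1]=S_{i_2}[a'..a'+k-1]$ is a common $k$-mer, so by $k$-completeness of $\langle G',\pi'\rangle$ it is reflected by a common subpath $q_0=\pi'(S_{i_1})[r_1..r_1+c]=\pi'(S_{i_2})[r_2..r_2+c]$ whose $S_{i_1}$-footprint contains $[a,a+k-1]$ and whose $S_{i_2}$-footprint contains $[a',a'+k-1]$. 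Since vertex $S_{i_1}$-footprints partition $S_{i_1}$ and $F_1$ meets the $S_{i_1}$-footprint of $q_0$ at $x$, the vertex $v'$ is one of the vertices of $q_0$, say its $d$-th; then $v'$ is also the $d$-th vertex of $q_0$ inside $\pi'(S_{i_2})$, which produces the required occurrence $\langle i_2,j_2'\rangle$ of $v'$, and $v'$ sits at one and the same character-offset $\omega$ of $\hat l(q_0)$ from both sides.

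It remains to show $\Psi_{\pi'(S_{i_2})}(j_2',j_2')$ meets $T_2$. Writing $\hat\lambda_s$ for the left end of the $S_{i_s}$-footprint of $q_0$, we get $F_1=[\hat\lambda_1+\omega,\,\hat\lambda_1+\omega+|l(v')|-1]$ and $\Psi_{\pi'(S_{i_2})}(j_2',j_2')=[\hat\lambda_2+\omega,\,\hat\lambda_2+\omega+|l(v')|-1]$, so it is enough to know that $q_0$ respects the alignment of $L$, namely $\hat\lambda_2-\alpha_2=\hat\lambda_1-\alpha_1$; granting this, $\Psi_{\pi'(S_{i_2})}(j_2',j_2')=F_1+(\alpha_2-\alpha_1)$ meets $T_2=T_1+(\alpha_2-\alpha_1)$ at $x+(\alpha_2-\alpha_1)$, finishing the proof. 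This alignment claim is the main obstacle: a priori the common $k$-mer could occur at different offsets of $\hat l(q_0)$ read from $S_{i_1}$ versus from $S_{i_2}$, a periodicity phenomenon. I would settle it by taking $q_0$ minimal among reflecting subpaths and arguing that any residual shift, after $k$-completeness is applied also to the $k$-mers of $L$ adjacent to $[a,a+k-1]$, forces a vertex on a supposedly common subpath whose label is not a factor of one of $S_{i_1},S_{i_2}$ near the occurrence of $L$ --- a contradiction. Equivalently, one first isolates the auxiliary fact that in a $k$-complete representation every common substring of length at least $k$ is carried by a common subpath with aligned occurrences, and then the lemma follows by the argument above.
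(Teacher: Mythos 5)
Your proof follows essentially the same route as the paper's: reduce to a single directly $k$-extendable pair by composing along the witnessing chain, extract from the common subpath a common $k$-mer whose $S_{i_1}$-occurrence contains a point of $F_1\cap T_1$, invoke $k$-completeness of $\langle G',\pi'\rangle$ to get a reflecting common subpath of $\pi'(S_{i_1})$ and $\pi'(S_{i_2})$ that must contain $v'$, and read off the occurrence $\langle i_2,j_2'\rangle$. The one step you flag as an obstacle --- that the reflecting subpath places the $k$-mer at the same offset of its label on both sides, so that the $S_{i_2}$-footprint of $v'$ is exactly $F_1+(\alpha_2-\alpha_1)$ --- is not discharged in the paper either: the definition of ``reflected'' only requires containment of the two $k$-mer intervals in the two footprints, and the paper's proof tacitly assumes alignment. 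So your write-up matches the intended argument and is more explicit about where it is delicate; a fully rigorous version would need the auxiliary alignment fact you sketch at the end (or a correspondingly strengthened notion of reflection).
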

	
	\begin{proof}
		From the definition of $k$-extendability there exists
		a sequence of $\pi$-occurrences of $v$,
		in which each two consecutive occurrences 
		extend to intervals indicating common subpath
		labeled with a string of length $\ge k$.
		Every such extension reflects a common $k$-mer in $\mathcal{S}$. 
		In the case of the first pair of $\pi$-occurrences of $v$,
		the $S_{i_1}$-interval indicating the occurrence
		of the $k$-mer 
		overlaps $\Psi_{\pi'(S_{i_1})}(j_1',j_1')$.
		From $k$-completeness of $\langle G',\pi'\rangle$, 
		by induction, $v'$ has $\pi'$-occurrences 
		satisfying such condition for 
		every $\pi$-occurrence of $v$ in the sequence,
		in particular for $\langle i_2,j_2\rangle$.
	\end{proof}
	
	\begin{theorem}\label{thm:equiv}
		Assume that variation graph representations 
		$\langle G,\pi\rangle$ and $\langle G',\pi'\rangle$
		of a set of strings $\mathcal{S}=\{S_1,\ldots,S_n\}$
		are $k$-complete and $k$-faithfull.
		Then $\langle G,\pi\rangle$ and $\langle G',\pi'\rangle$
		are equivalent.
	\end{theorem}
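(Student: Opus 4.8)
The plan is to show that $\langle G,\pi\rangle$ and $\langle G',\pi'\rangle$ reflect the same $m$-mers for every $m>0$; since the hypotheses and the conclusion are symmetric in the two representations, it is enough to prove that every $m$-mer reflected by $\langle G,\pi\rangle$ is reflected by $\langle G',\pi'\rangle$. So fix $m>0$ and a common $m$-mer $S_i[p..p+m-1]=S_{i'}[p'..p'+m-1]$ reflected in $\langle G,\pi\rangle$, witnessed by a common subpath $\rho=\pi(S_i)[q..q+l]=\pi(S_{i'})[q'..q'+l]$ with $\langle p,p+m-1\rangle$ a subinterval of $\Psi_i(q,q+l)=:\langle s,e\rangle$ and $\langle p',p'+m-1\rangle$ a subinterval of $\Psi_{i'}(q',q'+l)=:\langle s',e'\rangle$. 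Since $\rho$ occurs on both paths, $W:=S_i[s..e]$ equals $S_{i'}[s'..e']$ (both equal $\hat l(\rho)$), and for every position $c\in\langle s,e\rangle$ the letter $S_i[c]$ and the letter $S_{i'}[c+s'-s]$ are produced by the same vertex of $\rho$ at the same offset of its label; I call $c$ and $c':=c+s'-s$ \emph{$\rho$-aligned}, and write $v_c$ for that vertex.

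The core step is to build a common subpath of $\pi'(S_i)$ and $\pi'(S_{i'})$ that covers $\langle p,p+m-1\rangle$ in $S_i$ and $\langle p',p'+m-1\rangle$ in $S_{i'}$, since such a subpath reflects the $m$-mer in $\langle G',\pi'\rangle$. First, for every $\rho$-aligned pair $c,c'$ the two $\pi$-occurrences of $v_c$ that cover $c$ in $S_i$ and $c'$ in $S_{i'}$ form a $k$-extendable pair -- indeed, $k$-faithfulness of $\langle G,\pi\rangle$ guarantees that \emph{every} pair of occurrences of a vertex is $k$-extendable. Since both representations are $k$-complete, Lemma~\ref{lem:equiv} then yields that any vertex $u$ of $G'$ occurring in $\pi'(S_i)$ with its block overlapping the block of $v_c$ in $S_i$ -- in particular the vertex of $\pi'(S_i)$ that actually covers $c$ -- must also occur in $\pi'(S_{i'})$ with its block overlapping the block of $v_c$ in $S_{i'}$, hence meeting the aligned position $c'$. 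Letting $c$ range over $\langle s,e\rangle$ and feeding these overlaps into the $k$-completeness of $\langle G',\pi'\rangle$ applied (when $|W|\ge k$) to the mutually overlapping length-$k$ windows of $W$, each of which is a common $k$-mer of $S_i$ and $S_{i'}$, one shows that the vertex-block partition of $\pi'(S_i)$ on $\langle s,e\rangle$ coincides, after the shift $c\mapsto c+s'-s$, with that of $\pi'(S_{i'})$ on $\langle s',e'\rangle$, and that matching blocks carry the same vertex. The resulting common vertex sequence is a subpath of both $\pi'(S_i)$ and $\pi'(S_{i'})$ whose coverage contains $\langle s,e\rangle\supseteq\langle p,p+m-1\rangle$ in $S_i$ and $\langle s',e'\rangle\supseteq\langle p',p'+m-1\rangle$ in $S_{i'}$; it therefore reflects the $m$-mer, and running the whole argument with the roles of the two representations swapped gives equivalence.

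I expect the main obstacle to be exactly the passage from \emph{overlap} of blocks -- all that Lemma~\ref{lem:equiv} directly provides -- to \emph{coincidence} of the block partitions on the shared region, i.e. turning the collection of transferred occurrences into a genuine common subpath that covers (not merely meets) the $m$-mer. This needs an explicit induction along $W$ propagating the alignment through consecutive overlapping length-$k$ windows, together with separate treatment of two borderline situations: the ends of $W$ (equivalently, positions near the ends of $S_i$ or $S_{i'}$), where the outermost reconstructed vertices may overhang $\langle s,e\rangle$ and one must check they still cover $p$ and $p+m-1$; and the case $|W|<k$, possible when $m<k$, in which $W$ contains no length-$k$ window, so the alignment has to be carried instead along the chains of directly $k$-extendable occurrences supplied by $k$-faithfulness of $\langle G,\pi\rangle$.
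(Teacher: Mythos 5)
Your proposal follows essentially the same route as the paper's proof: use $k$-faithfulness of $\langle G,\pi\rangle$ to make each aligned pair of occurrences on the witnessing subpath $k$-extendable, transfer those occurrences to $\langle G',\pi'\rangle$ via Lemma~\ref{lem:equiv}, and assemble the transferred occurrences into a common subpath of $\pi'(S_i)$ and $\pi'(S_{i'})$ covering the $m$-mer. The obstacle you flag --- that Lemma~\ref{lem:equiv} only yields \emph{overlap} of blocks, not coincidence of the block partitions into a genuine common subpath --- is real, but the paper's own proof leaves exactly this step to the reader (``which, combined together, form a subpath covering $S_{i_2}[p'..p'+m-1]$''), so your plan is no less complete than the published argument.
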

	
	\begin{proof}
		Assume that the common $m$-mer $S_{i_1}[p..p+m-1]=S_{i_2}[p'..p'+m-1]$ 
		of $S_{i_1}$ and $S_{i_2}$ 
		is reflected in one of the representations, 
		say $\langle G,\pi\rangle$.
		Thus there exists a common subpath of
		$\pi(S_{i_1})$ and $\pi(S_{i_2})$ 
		covering the considered occurrences of the $m$-mer.
		For each vertex on this subpath the pair of its
		occurrences on $\pi(S_{i_1})$ and $\pi(S_{i_2})$ 
		is $k$-extendable due to $k$-faithfulness of
		$\langle G,\pi\rangle$.
		For every position $\langle i_1,j'\rangle$ 
		on the minimal subpath of $\pi'(S_{i_1})$
		covering $S_{i_1}[p..p+m-1]$,
		$S_{i_1}$-interval 
		$\Psi_{\pi'(S_{i_1})}(j',j')$ overlaps
		$\Psi_{\pi(S_{i_1})}(j,j)$,
		where $\langle i_1,j\rangle$ is one of the positions
		on the $\pi(S_{i_1})$-subpath covering the $m$-mer.
		Thus, due to Lemma \ref{lem:equiv}, all vertices 
		on the minimal subpath of $\pi'(S_{i_1})$
		covering $S_{i_1}[p..p+m-1]$ 
		have corresponding occurrences on $\pi'(S_{i_2})$,
		which, combined together,
		form a subpath covering $S_{i_2}[p'..p'+m-1]$.
		Therefore the considered common $m$-mer is reflected
		in representation $\langle G',\pi'\rangle$ too.
	\end{proof}

	The above theorem shows that 
	whenever a $k$-complete and $k$-faithful 
	variation graph representation of a given set of strings
	exists,
	it is determined up to equivalence. 
	In the next section we complement this result by
	showing how such variation graph can be built.
	
	\section{Graph transformation}
	
	In this section we show how to transform a de Bruijn graph
	representing a given set of strings 
	into a corresponding variation graph.
	The transformation algorithm consists of 3 steps
	(see Figure \ref{fig_steps}):
	\begin{enumerate}
		\item Split -- 
		conversion of vertices of the de Bruijn graph
		to unbranched paths with each vertex
		labeled with a single character.
		\item Merge -- a series of local modifications
		that merge incident edges inherited from the de Bruijn graph.
		\item Collapse -- a series of local modifications
		that removes isolated edges inherited from the de Bruijn graph.
	\end{enumerate}
	
	\begin{figure}[t]
		\centering
		\includegraphics[width=\textwidth]{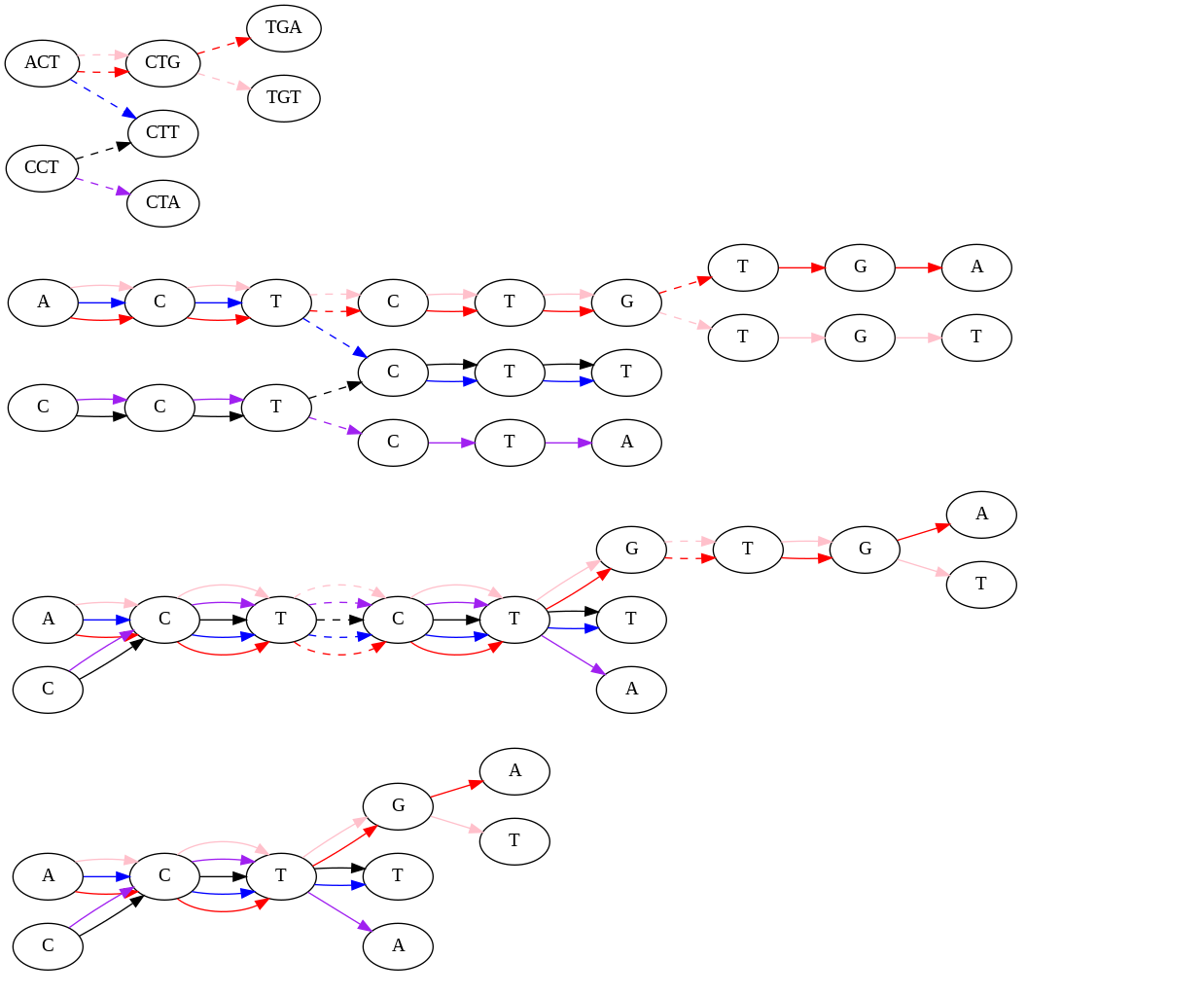}
		\caption{Steps of the graph transformation algorithm 
			for the graph representation of	the set of strings			 
			$S= \{ACTGA, ACTGT, ACTT, CCTT, CCTA\}$. 
			From top: input de Bruijn graph for $k=3$
			and transition graphs resulting from 
			Split, Merge and Collapse transformations,
			respectively. 
			Solid lines represent V-edges, dashed lines represent B-edges.}\label{fig_steps}
	\end{figure}
	
	We begin this section by introducing 
	\emph{transition graph}
	-- another type of string graph that will be used 
	in intermediate stages of the transformation.
	Then we describe consecutive transformation steps
	and show that they preserve desirable properties.
	Finally, we show that the whole transformation 
	yields a singular variation graph that is 
	both $k$-complete and $k$-faithful.
	
	\subsection{Transition graphs}
	
	A \emph{transition graph} of length $k$ over an alphabet $\Sigma$ 
	is a string graph $G=\langle V,E,l\rangle$, in which 
	$l:V\to\Sigma$ (i.e. every vertex is labeled with a single character) 
	and $E$ is a disjoint union of two subsets: 
	\begin{itemize}
		\item $E_V$ -- the set of \emph{variation edges} (or \emph{V-edges} for short),
		\item $E_B$ -- the set of \emph{de Bruijn edges} (or \emph{B-edges} for short),
	\end{itemize}
	A path is called \emph{V-path} if all its edges are V-edges.
	Labeling function $l$ is extended to V-paths as in variation graphs, 
	i.e. $\hat{l}(\langle v_1, \ldots, v_m\rangle)=  l(v_1)\cdot\ldots\cdot l(v_m)$
	for a V-paths $\langle v_1, \ldots, v_m\rangle$.
	
	Every path $p$ with $n$ $B$-edges can be split into a sequence 
	of $n+1$ \emph{maximal sub-V-paths} $\langle p_0,\ldots, p_n\rangle$
	such that for every $i\in\{1,\ldots,n\}$
	the last vertex of $p_{i-1}$ and the first vertex of $p_i$ 
	are connected by a $B$-edge.
    
	When this decomposition satisfies the following conditions:
	\begin{itemize}
		\item $|p_0|\ge k-1$, $|p_n|\ge k-1$ and $|p_i|\ge k$ for $i\in\{1,\ldots,n-1\}$,
		\item $\hat{l}(p_{i-1})[|p_{i-1}|-k+2..|p_{i-1}|]=\hat{l}(p_{i})[1..k-1])$
		for $i\in\{1,\ldots,n\}$,
	\end{itemize}
	the extension of labeling function is defined by formula 
	$\hat{l}(p)=\hat{l}(p_0)\cdot\hat{l}(p_1)[k..|\hat{l}(p_1)|]\cdot\ldots\cdot\hat{l}(p_n)[k..|\hat{l}(p_n)|]$.
	If moreover $|p_0|\ge k$ and  $|p_n|\ge k$,
	we call $p$ a \emph{valid} path.
	Interval $\langle i,j\rangle\in Int(p)$ is called \emph{valid}
	iff $p[i..j]$ is a valid path.
	The set of all valid intervals of a valid path $p$ will be denoted by $VInt(p)$.
	Since the labeling function is not defined on all possible paths,
	the notion of subpath-compatibility is adapted accordingly:
	the domain of $\Psi_p$ injection is restricted to $VInt(p)$.
	
	Let $p=\langle v_1, \ldots, v_m\rangle$ be a valid path 
	and let $S=\hat{l}(p)$.
	Let $b_i$ denote the number of $B$-edges in $p$ preceding its $i$-th vertex.
	The function $\psi_p:\{1,\ldots,m\}\to\{1,\ldots,|S|\}$
	defined by the formula $\psi_p(i)=i-(k-1)b_i$
	indicates residues in $S$ corresponding to particular vertices in $p$,
	i.e. the residues for which the condition $S[\psi_p(i)]=l(v_i)$ must be satisfied.
	Thus the function $\Psi_p:VInt(p)\to Int(S)$ defined by formula
	$\Psi_p(i,j)=\langle\psi_p(i),\psi_p(j)\rangle$ 
	ensures subpath-compatibility.

	All the definitions from Section \ref{representation} apply to transition graphs 
	with the only restriction that strings can be represented by valid paths only.
	
	Let $S_{i},S_{i'}$ be two (not necessarily different) strings from $\mathcal{S}$ and assume that 
	their representations
	have a common B-edge 
	$\pi(S_{i})[j..j+1]=\pi(S_{i'})[j'..j'+1]$.
	We say that these B-edge occurrences are \emph{consistent} 
	iff they are extandable to a common subpath
	$\pi(S_{i})[j-k+2..j+k-1]=\pi(S_{i'})[j'-k+2..j'+k-1]$.
	Representation $\langle G,\pi\rangle$ 
	is \emph{consistent} iff all common occurrences of $B$-edges are consistent.
	When this condition holds 
	for every pair of valid paths in $G$
	(i.e. not necessarily representing $\mathcal{S}$-sequences), 
	we say that $G$ is \emph{consistent}.

	\subsection{Transformation 1: Split}
	
	The Split operation transforms a dBG
	representation 
	$\langle G, \pi\rangle$ of a string collection
	$\mathcal{S}=\{S_1,\ldots,S_n\}$
	into a transition graph representation $\langle G', \pi'\rangle$
	by splitting each vertex $v$ of $G$ 
	into $k$ vertices labeled with single characters.
	More formally, $G'=\langle V', E_B, E_V, l'\rangle$, where:
	\begin{itemize}
		\item $V'= \underset{v\in V}{\bigcup}\{v_1,\ldots,v_k\}$,
		\item $E_V= \underset{v\in V}{\bigcup}
		\{\langle v_1,v_2\rangle,\ldots,\langle v_{k-1},v_k\rangle\}$,
		\item $E_B= \{\langle v_k,w_1\rangle: \langle v,w\rangle\in E\}$,
		\item $l'(v_i)=l(v)[i]$ for all $v\in V$ and $i\in\{1,\ldots,k\}$,
	\end{itemize}
	and paths $\pi'(S_i)$ are constructed from $\pi(S_i)$ by replacing 
	each vertex $v$ with a sequence $v_1,\ldots,v_k$.
	
	\begin{lemma}\label{lem_split}
		$G'$ is a consistent transition graph of length $k$ and  
		$\langle G', \pi'\rangle$ represents $\mathcal{S}$ 
		$k$-completely and $k$-faithfully.
	\end{lemma}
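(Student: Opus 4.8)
The plan is to verify the five assertions — that $G'$ is a transition graph of length $k$, that $\langle G',\pi'\rangle$ is a representation of $\mathcal{S}$, that this representation is $k$-complete, that it is $k$-faithful, and finally that $G'$ is consistent — one at a time, using throughout the de Bruijn structure of the input (recall that, by the preceding proposition, $\langle G,\pi\rangle$ is determined up to isomorphism and is itself $k$-complete and $k$-faithful). That $G'$ is a transition graph is immediate: $l'(v_i)=l(v)[i]\in\Sigma$, and $E_V\cap E_B=\emptyset$ since every V-edge has the form $\langle v_i,v_{i+1}\rangle$ with $i+1\le k$ whereas every B-edge has the form $\langle v_k,w_1\rangle$, and the $k$ fresh vertices created for distinct $v$ are distinct. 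For the representation conditions, the only non-routine point is $\hat l(\pi'(S_i))=S_i$. Writing $\pi(S_i)=\langle u^1,\dots,u^r\rangle$, the path $\pi'(S_i)$ is the concatenation of the $r$ blocks $\langle u^t_1,\dots,u^t_k\rangle$ linked by the B-edges $\langle u^t_k,u^{t+1}_1\rangle$, so its maximal sub-V-paths are exactly these blocks, each of length $k$; the overlap identity $\hat l(p_{t-1})[2..k]=\hat l(p_t)[1..k-1]$ required for the labeling extension to be defined is nothing but the de Bruijn edge condition for $\langle u^{t-1},u^t\rangle\in E$. Hence $\pi'(S_i)$ is valid (its first and last sub-V-paths have length $k$), and the telescoping formula for $\hat l$ collapses to $l(u^1)\cdot l(u^2)[k]\cdots l(u^r)[k]$, which is precisely the de Bruijn labeling of $\pi(S_i)$, equal to $S_i$. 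The two remaining conditions are inherited from $\langle G,\pi\rangle$: a vertex $v_i$ and a V-edge $\langle v_i,v_{i+1}\rangle$ occur wherever $v$ occurs in some $\pi(S_j)$, and a B-edge $\langle v_k,w_1\rangle$ occurs wherever the edge $\langle v,w\rangle$ joins consecutive vertices of some $\pi(S_j)$.

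For $k$-completeness, take a common $k$-mer $S_i[p..p{+}k{-}1]=S_{i'}[p'..p'{+}k{-}1]$. It is the label of a unique de Bruijn vertex $v$, and since in a dBG representation the vertex at path position $p$ of $\pi(S_i)$ carries the label $S_i[p..p{+}k{-}1]$, we have $\pi(S_i)[p]=\pi(S_{i'})[p']=v$. After the split, the block $\langle v_1,\dots,v_k\rangle$ therefore occurs in both $\pi'(S_i)$ and $\pi'(S_{i'})$, at positions $[(p{-}1)k{+}1,\,pk]$ and $[(p'{-}1)k{+}1,\,p'k]$; since $\psi_{\pi'(S_i)}$ sends path position $(p{-}1)k{+}s$ to residue $p{+}s{-}1$, the $\Psi$-image of that block interval is exactly $\langle p,\,p{+}k{-}1\rangle$, so this common subpath reflects the $k$-mer. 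For $k$-faithfulness, every $\pi'$-occurrence of a split vertex $v_s$ lies inside a full block, hence extends backwards to $v_1$ and forwards to $v_k$; consequently any two occurrences of $v_s$ are \emph{directly} $k$-extendable, their common subpath being $\langle v_1,\dots,v_k\rangle$ with label $l(v)$ of length $k$.

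Finally, consistency. The key structural observation is that the V-edges of $G'$ form a disjoint union of the simple paths $v_1\to v_2\to\cdots\to v_k$, so for $1<s\le k$ the vertex $v_s$ has the unique V-edge predecessor $v_{s-1}$, and for $1\le s<k$ the unique V-edge successor $v_{s+1}$. Let $q,q'$ be any two valid paths of $G'$ sharing a B-edge occurrence $q[j..j{+}1]=q'[j'..j'{+}1]$, and say this B-edge is $\langle v_k,w_1\rangle$. In a valid path every maximal sub-V-path has length $\ge k$; in particular the two sub-V-paths flanking this B-edge do, so the positions $j{-}k{+}2,\dots,j$ lie inside the sub-V-path ending at $q[j]=v_k$ and the positions $j{+}1,\dots,j{+}k{-}1$ lie inside the one starting at $q[j{+}1]=w_1$. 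Walking along the forced chains of V-edge predecessors and successors gives $q[j{-}k{+}2..j]=\langle v_2,\dots,v_k\rangle$ and $q[j{+}1..j{+}k{-}1]=\langle w_1,\dots,w_{k-1}\rangle$, and identically for $q'$; hence $q[j{-}k{+}2..j{+}k{-}1]=q'[j'{-}k{+}2..j'{+}k{-}1]=\langle v_2,\dots,v_k,w_1,\dots,w_{k-1}\rangle$, so the B-edge occurrences are consistent and $G'$ is consistent.

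The real work, I expect, is the bookkeeping in the representation step — pinning down the block decomposition of $\pi'(S_i)$, checking that the transition-graph labeling extension is genuinely defined on it, and matching the resulting label with the de Bruijn labeling of $\pi(S_i)$ — rather than anything conceptually deep. The consistency argument, though short, is the most delicate point: it works precisely because splitting turns the V-edge relation into a disjoint union of length-$k$ paths, so a B-edge occurrence inside a valid path has no freedom in how it is flanked.
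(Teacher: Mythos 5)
Your proof is correct and follows essentially the same route as the paper's: consistency from the fact that each B-edge is forced to be flanked by the unbranched blocks $\langle v_1,\dots,v_k\rangle$ and $\langle w_1,\dots,w_k\rangle$, $k$-completeness from the unique de Bruijn vertex carrying each common $k$-mer, and $k$-faithfulness from every occurrence of $v_s$ extending to the same full block. You additionally verify in detail that $\langle G',\pi'\rangle$ is a representation (the telescoping of $\hat l$ and the index bookkeeping), which the paper takes for granted; this is a sound and welcome supplement, not a different argument.
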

	
	\begin{proof}
		Consistency follows from the fact that every $B$-edge
		$\langle v_k,w_1\rangle$ is preceded and followed by
		unbranched $V$-paths $\langle v_1,\ldots,v_k\rangle$ and 
		$\langle w_1,\ldots,w_k\rangle$, respectively.
		Every common $k$-mer $W$ in $\mathcal{S}$ 
		was reflected in $\langle G, \pi\rangle$
		by the unique vertex $v\in V$ such that $l(v)=W$, 
		and thus is also reflected in $\langle G', \pi'\rangle$
		by the path $\langle v_1,\ldots,v_k\rangle$,
		which proves that the representation is	$k$-complete.
		Finally, $k$-faithfullness is due to the fact that 
		every occurrence of a vertex $v_i$ in any valid path 
		extends to the same subpath $\langle v_1,\ldots,v_k\rangle$.
	\end{proof}

	\subsection{Transformation 2: Merge}

	We define $F(G)$ as a set of vertices in $G$ with at least two outgoing $B$-edges, and $B(G)$ as a set of vertices in $G$ with at least two incoming $B$-edges.
	The second step of our algorithm consists of 
	a series of local graph modifications 
	affecting either $V$-paths following elements of $F(G)$
	or $V$-paths preceding elements of $B(G)$,
	called \emph{forward merge} and \emph{backward merge}, respectively. 
	These operations are symmetric, so 
	below we describe the forward merge operation only.
	
	Let $v$ be an element of $F(G)$. 
	As we know from the consistency condition, 
	each $B$-edge starting from $v$ is 
	preceded by the same $V$-path
	$\langle v_1, \ldots, v_{k-2}, v \rangle$
	and followed by another $V$-path of length $k-1$
	and the same sequence of labels.
	Let us denote all these paths as 
	$\langle w^1_1, \ldots, w^1_{k-1} \rangle,\ldots,\langle w^m_1, \ldots, w^m_{k-1} \rangle$.
	
	The forward merge operation for each $n \in \{1,\ldots, k-1\}$ 
	merges all vertices $w^1_n,\ldots,w^m_n$ into a single vertex $w_n$ 
	(note that all vertices $w^1_n,\ldots,w^m_n$
	have the same label, namely $l(v_n)$).
	Edges incident with merged vertices are replaced
	with edges joining respective new vertices.
	Consequently, all $B$-edges outgoing from $v$
	are merged into a single $B$-edge $\langle v,w_1\rangle$.
	
	The transformed paths $\pi'(S_i)$ are obtained from $\pi(S_i)$ 
	by replacing all occurrences 
	of vertices $w^1_n,\ldots,w^m_n$ with $w_n$. 
	
	\begin{lemma}\label{lem_merge}
		The Merge operation preserves the consistency of the graph,
		as well as the $k$-completeness and $k$-faithfulness of the representation.
	\end{lemma}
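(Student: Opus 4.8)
I would treat the Merge operation as a vertex contraction and push everything through it. Let $\phi\colon V(G)\to V(G')$ be the map identifying $w^1_n,\dots,w^m_n$ with $w_n$ for each $n\in\{1,\dots,k-1\}$ and fixing every other vertex. First I would record the formal properties of $\phi$: it preserves vertex labels, sends $V$-edges to $V$-edges and $B$-edges to $B$-edges, satisfies $E(G')=\{\langle\phi(x),\phi(y)\rangle:\langle x,y\rangle\in E(G)\}$, hence maps paths to paths and valid paths to valid paths, and it commutes with $\hat l$, with the decomposition into maximal sub-$V$-paths, and with the position maps $\psi$ wherever these are defined. Since $\pi'(S_i)=\phi(\pi(S_i))$, the pair $\langle G',\pi'\rangle$ is again a representation of $\mathcal S$; moreover $\phi$ carries any common subpath reflecting an $m$-mer in $\langle G,\pi\rangle$ to one reflecting the same $m$-mer in $\langle G',\pi'\rangle$, and any directly $k$-extendable pair of occurrences to a directly $k$-extendable pair. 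In particular $k$-completeness of $\langle G',\pi'\rangle$ is immediate, and $k$-extendability of a pair of occurrences, once it holds in $G$, survives in $G'$. It then remains to deal with the two genuinely new phenomena — occurrences created by identifying the $w^l_n$, and the merged $B$-edge $\langle v,w_1\rangle$ — and by symmetry I treat the forward merge only.

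For $k$-faithfulness, any vertex of $G'$ other than $w_1,\dots,w_{k-1}$ has exactly the $\pi'$-occurrences it had as a vertex of $G$, so its occurrences are $k$-extendable in $G$ and hence in $G'$. For a merged vertex, $Occ_{\pi'}(w_n)=\bigcup_l Occ_\pi(w^l_n)$, and each block $Occ_\pi(w^l_n)$ is already internally $k$-extendable in $G$, hence in $G'$; so it only remains to bridge two occurrences lying in distinct branches $l\neq l'$. Every edge of $G$ occurs on some path, so I would pick $\pi(S_c)\ni\langle v,w^l_1\rangle$ and $\pi(S_{c'})\ni\langle v,w^{l'}_1\rangle$; by consistency of $G$ these $B$-edges are there followed by $w^l_1,\dots,w^l_{k-1}$ and by $w^{l'}_1,\dots,w^{l'}_{k-1}$ respectively, which produces in $\pi(S_c)$ and $\pi(S_{c'})$ occurrences of $w^l_n$ and $w^{l'}_n$ that are $k$-extendable (in $G$, hence in $G'$) with the two given ones. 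Thus the whole problem reduces to bridging, inside $G'$, the occurrences of $w_n$ sitting $n$ steps past the merged edge $\langle v,w_1\rangle$ in $\pi'(S_c)$ and in $\pi'(S_{c'})$.

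Here the naive attempt fails: the stretch $\langle v,w_1,\dots,w_{k-1}\rangle$ carries only $k-1$ residues (because of the $(k-1)$-character overlap at the $B$-edge), so two such fresh occurrences of $w_n$ cannot in general be \emph{directly} $k$-extended through it. My plan is instead to route the chain through occurrences of $v$ and shift it. After the forward merge the only edge leaving $v$ is the single $B$-edge $\langle v,w_1\rangle$, and for $i<k-1$ the only $V$-edge leaving $w_i$ is $\langle w_i,w_{i+1}\rangle$; hence in any path of $G'$ a non-terminal occurrence of $v$ is followed by the forced stretch $w_1,w_2,\dots,w_{k-1}$. The two occurrences of $v$ at the heads of the two edges above are non-terminal, and since the forward merge does not re-merge $v$, they are $k$-extendable in $G$, hence in $G'$, via a chain of occurrences of $v$; any terminal occurrence of $v$ may be deleted from that chain, because two consecutive directly-$k$-extendable witnesses running through a terminal occurrence are nested suffixes of one path, so the shorter one already links the outer endpoints (iterate to remove runs of terminals). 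Shifting every occurrence of $v$ in the resulting all-non-terminal chain forward by $n$ gives a sequence of occurrences of $w_n$; a common subpath witnessing direct $k$-extendability of two consecutive occurrences of $v$, prolonged forward along the forced stretch up to $w_{k-1}$, remains a common subpath, still contains the shifted occurrence of $w_n$, and still has label of length $\ge k$ — prolonging inside a $V$-stretch never shortens $\hat l$, and appending the length-$(k-1)$ stretch $w_1,\dots,w_{k-1}$ after $v$ leaves $\hat l$ well-defined and unchanged. This shifted sequence is the required $k$-extendability chain for $w_n$ in $G'$.

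Consistency of $G'$ I expect to be comparatively routine: the only $B$-edges whose surrounding $(2k-2)$-context is not literally inherited from $G$ are $\langle v,w_1\rangle$ and the $B$-edges whose head is some $w_m$, and for each of them the $k-1$ vertices following it in a valid path lie in the forced stretch $w_1,\dots,w_{k-1}$ and are the common $\phi$-image of the determined stretches of its $G$-preimages, while the $k-1$ vertices preceding it form the same $V$-path for every preimage — exactly the property of the out-$B$-edges of $v$ (all preceded by the common $V$-path $\langle v_1,\dots,v_{k-2},v\rangle$) that the description of the Merge operation already invokes, now read off from consistency of $G$. I expect the main obstacle to be the structural bookkeeping underneath these last two paragraphs: the forward merge destroys the ``unique $V$-successor'' property at $w_{k-1}$ (which acquires one outgoing $V$-edge per branch $W^l_k$), so one must verify that the weaker facts actually used — uniqueness of the out-edge of $v$ and of the $V$-out-edges of $w_1,\dots,w_{k-2}$, and determinacy of $B$-edge contexts — really are preserved across Split and all previous Merge steps; the bridging argument above is deliberately arranged never to appeal to anything past $w_{k-1}$, which is the delicate point.
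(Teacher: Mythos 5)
Your proof is correct and follows essentially the same route as the paper's: inherit $k$-extendability through the contraction map, bridge occurrences of $w_n$ from different branches via occurrences adjacent to the merged $B$-edge, chain through the unaffected vertex $v$ (discarding terminal occurrences of $v$ from the chain), and shift the chain forward by $n$ along the forced stretch $w_1,\dots,w_{k-1}$. Your explicit observation that the stretch past the $B$-edge contributes only $k-1$ residues (so the bridge cannot be a single direct extension) and your flagged concern about the structural invariants behind the forced stretch are both points the paper's proof relies on implicitly.
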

	
	\begin{proof}
		Consistency follows from the fact that merged
		$B$-edges are followed in a new graph by a unique path 
		$\langle w_1, \ldots, w_{k-1}\rangle$.
		
		If a common $k$-mer is reflected in $ \langle G, \pi \rangle$ by a $V$-path $p$ containing the vertex $w^i_n$, then after the merge, it is reflected in $ \langle G', \pi' \rangle$  by a path obtained from $p$ by replacing $w^i_n$ with $w_n$. Therefore, the representation is still $k$-complete.
		
		Let $\langle i,j \rangle , \langle i',j' \rangle$ be 
		two $\pi'$-occurrences of the same vertex 
		$u'$ in paths $\pi'(S)$, i.e $u'=\pi'(S_i)[j]=\pi'(S_{i'})[j']$.
		If corresponding paths in $\pi$ also have 
		the same vertex $u$ on respective positions 
		(i.e. $u=\pi(S_i)[j]=\pi(S_{i'})[j']$),
		the sequence supporting the $k$-extendability 
		of these $\pi$-occurrences of $u$
		supports the $k$-extendability 
		of the $\pi'$-occurrences of $u'$ too.
		The only case where the $\pi$-paths 
		have different vertices on these positions 
		is when $\pi(S_i)[j]=w^m_n$ and $\pi(S_{i'})[j']= w^{m'}_n$ for $m\neq m'$.
		
		If both $\pi'(S_i)[j]$ and $\pi'(S_{i'})[j']$ 
		can be extended to the common subpaths 
		of the form $\langle v, w_1, w_2, \ldots, w_{k-1} \rangle$, 
		we have $v=\pi'(S_i)[j-n]=\pi'(S_{i'})[j'-n]$
		and this pair of $\pi'$-occurrences of $v$
		is $k$-extendable, 
		because vertex $v$ was unaffected by the transformation.
		Thus there exists a sequence 
		$\langle i,j-n \rangle=\langle i_0,j_0 \rangle,
		\langle i_1,j_1 \rangle,\ldots,
		\langle i_l,j_l \rangle=\langle i',j'-n \rangle$
		of $\pi'$-occurrences of $v$
		supporting the $k$-extendability of this pair.
		for some $1<l'<l$,
		the occurrence $\langle i_{l'},j_{l'} \rangle$
		can be skipped in this sequence, 
		because in this case occurrences
		$\langle i_{l'-1},j_{l'-1} \rangle$
		and $\langle i_{l'+1},j_{l'+1} \rangle$
		must be directly $k$-extendable.
		Therefore we can assume without loss of generality
		that $j_{l'}<|\pi'(S_{i_{l'}})|$
		for all $1\le l'\le l$.
		Thus all these occurrences of $v$
		are followed in paths 
		$\pi(S_{i_1}),\ldots,\pi(S_{i_1})$ 
		by occurrences of vertices
		$w_1, w_2, \ldots, w_{k-1}$.
		Consequently, the sequence
		$\langle i_0,j_0+n \rangle,
		\ldots,
		\langle i_l,j_l+n \rangle$
		supports the $k$-extendability 
		of the pair of 
		$\pi'$-occurrences
		$\langle i,j \rangle,\langle i',j' \rangle$ 
		of vertex $w_n$.

		The proof is concluded with the observation that
		each occurrence of $w_n$ originating from $w^m_n$ (respectively  $w^{m'}_n$)
		is $k$-extendable with some occurrence of this vertex belonging to a subpaths of the form 
		$\langle v, w_1, w_2, \ldots, w_{k-1} \rangle$, and all occurrences belonging to such subpath are $k-$extendable one with each other.
		
	\end{proof}

	\subsection{Transformation 3: Collapse}

	The third step of our transformation consists
	of a series of local modifications, 
	each of which removes one $B$-edge.
	
	Consider a $B$-edge $\langle v_{k-1},w_1\rangle$. 
	Due to the consistency of the representation 
	all the occurrences of this edge in paths representing $\mathcal{S}$ 
	extend to the same subpath
	$\langle v_1, \ldots, v_{k-1},w_1, \ldots, w_{k-1} \rangle$
	satisfying $l(v_1)=l(w_1),\ldots,l(v_{k-1})=l(w_{k-1})$.
	Collapse operation on the edge  $\langle v_{k-1},w_1\rangle$ consists of:
	\begin{itemize}
		\item merging each pair of vertices $v_n,w_n$
		into a new vertex $u_n$,
		\item removing the $B$-edge  $\langle v_{k-1},w_1\rangle$,
		\item replacing other edges incident with merged vertices with new edges incident with respective new vertices,
		\item replacing $\langle v_1, \ldots, v_{k-1},w_1, \ldots, w_{k-1} \rangle$ 
		with $\langle u_1, \ldots, u_{k-1}\rangle$
		in paths representing $\mathcal{S}$, 
		\item replacing other occurrences of $v_n$ and $w_n$ with $u_n$ in paths representing $\mathcal{S}$.
	\end{itemize}
	It may happen that a $\pi$-path traverses
	$B$-edge $\langle v_{k-1},w_1\rangle$ 
	several times such that $k-1$ vertices preceding 
	and $k-1$ vertices following this edge overlap,
	i.e. $v_1=w_{t+1},v_2=w_{t+2},\ldots,v_{k-1-t}=w_{k-1}$ for some $1<t<k-1$
	(see Fig. \ref{fig:cycle}).
	In this case:
	\begin{itemize}
		\item in addition, 
		vertices $u_n,u_{t+n},u_{2t+n},\ldots$
		for $1\le n<t$ are merged
		(as a transitive consequence of merging
		$v_n=w_{t+n},v_{t+n}=w_{2t+n},\ldots$),
		\item a subpath
		$\langle v_1, \ldots, w_{k-1} \rangle$ in $\pi(S_i)$
		traversing $B$-edge $\langle v_{k-1},w_1\rangle$ 
		$m$ times in this way 
		is replaced in $\pi'(S_i)$
		with subpath
		$\langle u_1, u_2 \ldots u_{k-1+(m-1)t} \rangle$, 
		where $u_{n+t}=u_{n}$.
	\end{itemize}

	We will denote by $U_n$ the set of all vertices 
	in representation $\langle G, \pi \rangle$ 
	that will be merged into a single vertex $u_n$ 
	in representation $\langle G', \pi' \rangle$ 
	resulting from the Collapse operation 
	on a $B$-edge $\langle v_{k-1},w_1\rangle$.
	
	It must contain $v_n$ and $w_n$, 
	but there can be more elements in this set.
	If $v_n=v_{n+t}$ (and thus path $\langle v_1, \ldots v_{k-1} \rangle$ has a cycle), 
	then also $w_{n+t}$ belongs to $U_n$ 
	and $U_n=U_{n+t}$.
	Similarly, if $w_n=w_{n+t}$ then $v_{n+t}$ has to belong to $U_n$.
	Finally, if $\langle v1, \ldots , v_{k-1} \rangle$ 
	and $\langle w_1, \ldots, w_{k-1} \rangle$ 
	share common vertices 
	$v_{n}=w_{m_1}, v_{m_1}=w_{m_2},v_{m_2}=w_{m_3} \ldots, v_{m_{r-1}}=w_{m_r}$ 
	for some positive integer $r$, 
	then $v_{m_1}, v_{m_2}, \ldots v_{m_r}$ 
	also belongs to $U_n$ 
	and $U_n=U_{m_1}=U_{m_2}= \ldots =U_{m_r}$.

.

	\begin{figure}
		\centering
		\includegraphics[width=\textwidth]{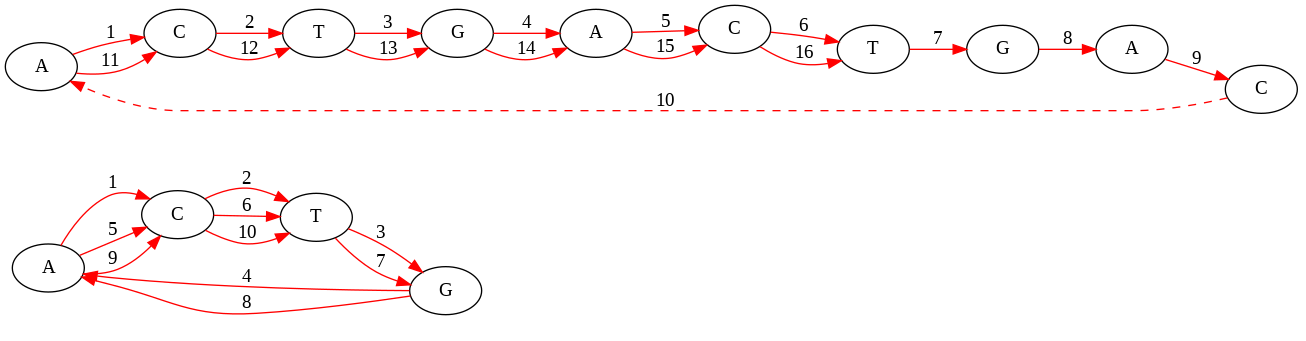}
		\caption{Example of Collapse transformation 
			applied to a $B$-edge,
			in which $k-1$ vertices following this edge 
			overlap with $k-1$ vertices preceding it. 
			$S=\{ACTGACTGACT\}$, $k=7$, 
			labels above edges indicate their order in the path.
			Top: $B$-edge 10 is 
			followed by $k-1$ vertices connected by edges 11-15
			and preceded by $k-1$ vertices connected by edges 5-9,
			the overlap forms a subpath with vertices $A$ and $C$
			connected by edge labeled with 5 before the $B$-edge and 15 after the $B$-edge.
			Bottom: after the Collapse operation 
			the overlapping subpath is merged with
			two other subpaths with vertices $A$ and $C$:
			preceding the $B$-edge and following it.
		}\label{fig:cycle}
	\end{figure}

	\begin{lemma}\label{lem_collapse}
		The Collapse operation preserves consistency, $k$-faithfulness and $k$-completeness of the representation.
	\end{lemma}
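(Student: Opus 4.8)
\emph{Plan.} Write $e=\langle v_{k-1},w_1\rangle$ for the $B$-edge being collapsed, $R=\langle v_1,\dots,v_{k-1},w_1,\dots,w_{k-1}\rangle$ for the common extension of its occurrences guaranteed by consistency, and $\phi\colon V\to V'$ for the vertex map sending every $v_n$ and every $w_n$ to $u_n$ (together with the transitive identifications $U_n=U_m$ of the cyclic case) and fixing all other vertices. The whole argument is a transport of witnesses along $\phi$ and the induced map $\rho_i$ from positions of $\pi(S_i)$ to positions of $\pi'(S_i)$. Everything rests on one observation: $e$ deduplicates exactly $k-1$ residues of each string crossing it, so replacing the $2(k-1)$ vertices of an $R$-occurrence by the $k-1$ vertices $\langle u_1,\dots,u_{k-1}\rangle$ preserves $S_i$-coordinates. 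Concretely, $\psi_{\pi(S_i)}$ assigns the same value to the two positions of $v_n$ and $w_n$ inside any $R$-occurrence, so the merge is well defined on coordinates, and every surviving occurrence keeps its coordinate because the $k-1$ vertices removed ahead of it and the unit drop in $b_i$ cancel in $\psi_{\pi(S_i)}$. Using this I would first check that $\langle G',\pi'\rangle$ is again a representation by a transition graph of length $k$ with $\hat l(\pi'(S_i))=S_i$, and that every $\pi'(S_i)$ stays \emph{valid}: in a valid path the maximal $V$-paths abutting $e$ have length $\ge k$, so the merged $V$-path around the deleted edge again has length $\ge k$.

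\emph{Consistency.} Every $B$-edge of $G'$ other than $e$ is a $\phi$-image of a $B$-edge of $G$, and the collapse creates no new branching around it: the surviving $B$-edges are still flanked by uniquely determined $V$-paths, by the same mechanism as in the proofs of Lemmas~\ref{lem_split} and~\ref{lem_merge}. Given two occurrences of such an edge on valid $\pi'$-paths I would lift them to occurrences of a preimage $B$-edge on valid $\pi$-paths, invoke consistency of $G$ for the common extension of length $2(k-1)$ around it, and push that extension forward; contracting copies of $R$ preserves both equality of vertex sequences and label lengths, so the image is again a common extension of the required length. Consistency of the representation is the special case of $\pi'$-paths, so no separate argument is needed.

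\emph{$k$-completeness.} Let $P$ be a common subpath of $\pi(S_i)$ and $\pi(S_{i'})$ reflecting a common $k$-mer; such a $P$ is valid. The key point is that $P$ meets an $R$-occurrence only ``cleanly'': $R$ labels a string of length $k-1$, hence no subpath of $R$ reflects a $k$-mer, and a valid subpath through $e$ contains the entire $R$-occurrence around $e$; therefore $P$ either contains that $R$-occurrence outright, or meets it in a prefix of the $v$-part while extending to its left, or in a suffix of the $w$-part while extending to its right. In each case $\rho_i(P)$ is a contiguous subpath of $\pi'(S_i)$ -- a prefix or suffix of $\langle u_1,\dots,u_{k-1}\rangle$ glued onto the untouched flanks -- and the same holds on the $S_{i'}$ side, so by preservation of coordinates the image still reflects the same $k$-mer.

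\emph{$k$-faithfulness and the main obstacle.} The occurrences of $u_n$ in $\pi'$ are exactly the $\rho$-images of the occurrences of the vertices of $U_n$, with the $v$-part occurrence and the $w$-part occurrence of a single $R$-traversal merged into one occurrence of $u_n$ (since $\rho_i$ sends the $v_n$-position and the $w_n$-position of that traversal to the same position). A directly $k$-extendable pair of $\pi$-occurrences of a vertex maps to a directly $k$-extendable pair -- or to a pair of coinciding occurrences -- of $\pi'$-occurrences, because $\phi$ carries the common extension to a common subpath and contraction is length-preserving. Hence $k$-faithfulness of $\langle G,\pi\rangle$ makes the occurrences of each vertex of $U_n$ mutually $k$-extendable in $\pi'$; since $e$ lies on some $\pi$-path there is at least one $R$-traversal, which fuses the ``from $v$'' and ``from $w$'' classes, and transitivity of $k$-extendability finishes the proof. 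I expect the cyclic configuration of Fig.~\ref{fig:cycle} to be the genuine obstacle throughout: there the contraction of $R$ is no longer a plain interior deletion, $\rho_i$ ceases to be monotone, and $U_n$ acquires extra vertices through the transitive identifications, so the ``clean intersection'' dichotomy, the lift in the consistency argument, and the label-length bookkeeping all have to be re-examined for overlapping windows.
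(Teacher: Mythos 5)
Your proposal follows essentially the same route as the paper: transport all witnesses (reflecting subpaths, extension sequences) along the vertex map $\phi$ and the induced position maps, and for $k$-faithfulness use the fact that some $\pi$-path traverses the whole extension $R$ of the collapsed edge, so that after contraction the $v_n$-occurrence and the $w_n$-occurrence on that traversal become a single occurrence of $u_n$, fusing the ``from $v_n$'' and ``from $w_n$'' classes; transitivity of $k$-extendability then finishes. Your coordinate bookkeeping (that $\psi_{\pi(S_i)}$ takes the same value at the $v_n$- and $w_n$-positions of an $R$-traversal, so the contraction preserves $S_i$-coordinates) is more explicit than anything in the paper, and your consistency argument is more substantive than the paper's one-line remark that labels are unchanged.

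The one place you stop short is the cyclic/overlapping configuration, which you flag as needing re-examination rather than resolving. For $k$-faithfulness it is not in fact a new obstacle: when $U_n$ acquires extra vertices through identifications such as $v_n=v_{n+t}$ or $v_n=w_{m_1},\,v_{m_1}=w_{m_2},\ldots,v_{m_{r-1}}=w_{m_r}$, the paper simply iterates your own fusion argument --- an $R$-traversal fuses the occurrences originating from $v_{m_s}$ with those originating from $w_{m_s}$ for every $s$, and the equalities $v_{m_{s-1}}=w_{m_s}$ chain these fused classes together, so transitivity again makes all occurrences of $u_n$ mutually $k$-extendable. Your remaining worries (monotonicity of $\rho_i$, the ``clean intersection'' dichotomy, the lift in the consistency argument for overlapping windows) are legitimate loose ends, but they concern the completeness and consistency parts, where the paper offers no more detail than you do; the mechanism you already have is exactly the one the paper deploys.
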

	
	\begin{proof}
		Consistency follows from the fact that Collapse
		does not modify the labels of the vertices.

		Similarly, $k$-completeness follows from the fact that 
		if a common $k$-mer was reflected in $\langle G, \pi\rangle$ by a valid $V$-path $p$, 
		then it is also reflected in $ \langle G', \pi' \rangle $ 
		by a path obtained from $p$ by replacing  occurrences of $v_n$ (or $w_n$, respectively) with  $u_n$.

		In order to prove $k$-faithfulness,	
		consider two occurrences $\langle i, j \rangle$ and $\langle i', j' \rangle$ of the same vertex 
		in representation $\langle G', \pi' \rangle$ originating from transformation of $\langle i, q \rangle$ and $\langle  i', q'\rangle$ in representation $\langle G, \pi \rangle$.
		If before the modification of the graph these paths had on respective positions 
		the same vertex $\pi(S_i)[q]= \pi(S_{i'})[q']$, 
		then their extensions are preserved and support the $k$-extendability 
		of  $\langle i, j \rangle$ and $\langle i', j' \rangle$.
		
		If paths in $\langle G, \pi \rangle$ had different vertices on respective positions, then both these vertices belong to the set $U_n$.
		
		First, consider situation when one of them was $v_n$ and the other was $w_n$. 
		We know that there is at least one path $\pi(S_i)$ 
		that traverses the whole subpath $\langle v_1, \ldots, v_{k-2} v_{k-1},w_1, w_2, \ldots, w_{k-1} \rangle$.  
		Thus in $\pi'(S_i)$ we have an occurrence of $u_n$
		that corresponds to occurrences of both $v_n$ and $w_n$ in $\pi(S_i)$ 
		on positions separated by $k-2$ vertices. 
		Therefore this occurrence of $u_n$ is $k$-extendable 
		with all other occurrences originating from $v_n$, 
		and also with all  other occurrences originating from $w_n$.

		For similar reasons all occurrences originating from $v_{n+t}$ 
		have to be $k$-extendable with the ones originating from $w_{n+t}$. 
		Thus if $v_n=v_{n+t}$, 
		all occurrences originating 
		from different vertices $w_n$ and $w_{n+t}$ 
		are also $k$-extendable 
		(and symmetrically: 
		all occurrences originating from different vertices $v_n$ and $v_{n+t}$
		are $k$-extendable  if $w_n=w_{n+t}$).
		
		The same argument applies to the case when 
		$v_{n}=w_{m_1}, v_{m_1}=w_{m_2}, v_{m_2}=w_{m_3}, \ldots, v_{m_{r-1}}=w_{m_r}$: 
		all occurrences of $u_n$ originating from $v_n$ 
		are $k$-extendable with those originating from $v_{m_1}$, 
		since $w_{m_1}=v_n$. 
		Because $w_{m_2}=v_{m_1}$,
		these occurrences are $k$-extendable with those originating from $v_{m_2}$. 
		Applying this reasoning repeatedly, 
		we conclude that
		all occurrences of $u_n$ originating from the vertex $v_n$ 
		are $k$-extendable with those originating 
		from any vertex $v_{m_s}$ for $s \in \{1, \ldots r \}$.
		
		Thus we showed that all occurrences of vertex $u_n$ in $\pi'(S_i)$ 
		originating from different vertices in $\pi(S_i)$ 
		are always $k$-extendable one with each other.

	\end{proof}

\subsection{Correctness of the algorithm}

We can now formulate the main result of this section.

\begin{theorem}\label{thm:build}
	Given a de Bruijn graph of length $k$
	representing a collection of strings $\mathcal{S}$,
	the transformation algorithm always terminates
	resulting in a $k$-complete and $k$-faithful
	variation graph representing $\mathcal{S}$.
\end{theorem}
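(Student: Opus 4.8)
The plan is to combine the three structural lemmas with a short termination argument. By Lemmas~\ref{lem_split}, \ref{lem_merge} and \ref{lem_collapse}, after Split and after every single Merge or Collapse step the current object is a consistent transition graph of length $k$ whose representation of $\mathcal S$ is $k$-complete and $k$-faithful. Hence it suffices to prove that (i) the algorithm halts and (ii) its final configuration is a (singular) variation graph; the asserted properties of the output then follow at once by chaining the lemmas from the input de Bruijn graph through the three phases.

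For termination I would use the number of $B$-edges $|E_B|$ as a monotone progress measure. Split is a single operation. A forward merge is applied only to a vertex $v\in F(G)$, which by definition has $m\ge 2$ outgoing $B$-edges; the operation replaces these $m$ edges by the single $B$-edge $\langle v,w_1\rangle$ and otherwise merely redirects endpoints of the remaining $B$-edges (possibly coalescing several of them into one), so $|E_B|$ strictly decreases and never grows; the same holds, symmetrically, for backward merges. Since $|E_B|$ is a non-negative integer, the Merge phase halts, and by construction it halts precisely when $F(G)=B(G)=\emptyset$. Every Collapse step deletes exactly one $B$-edge and creates none, so $|E_B|$ again strictly decreases, and the Collapse phase halts precisely when $E_B=\emptyset$. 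Along the way each step is well defined: every remaining $B$-edge occurs in some $\pi(S_i)$ by the representation conditions, and consistency --- carried by the lemmas --- forces all occurrences of that $B$-edge to extend to one common subpath of length $2k-2$, which is exactly the local configuration the Merge and Collapse operations manipulate.

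Finally, a consistent transition graph with $E_B=\emptyset$ together with its representation is a singular variation graph representation of $\mathcal S$: with no $B$-edges every path is a $V$-path, so $\hat l$ is defined everywhere and equals the concatenation of the single-character vertex labels, which is precisely the variation-graph labeling extension, with all labels of length $1$. Each path $\pi(S_i)$ is a $V$-path of length $|S_i|\ge k$, hence valid, and the other representation conditions are inherited from the lemmas. Combining this with the $k$-completeness and $k$-faithfulness propagated along the chain of lemmas establishes the theorem. The one place I expect to need care is the termination of Merge: one cannot argue by shrinking $F(G)\cup B(G)$ (a merge may hand new $B$-edges to a merged vertex), and must instead verify that no merge step ever increases $|E_B|$, even when it identifies vertices carrying further $B$-edges --- the counting argument above is what makes ``Merge run to completion'' well founded.
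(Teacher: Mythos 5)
Your proposal is correct and follows essentially the same route as the paper: invoke Lemmas~\ref{lem_split}--\ref{lem_collapse} to propagate consistency, $k$-completeness and $k$-faithfulness, use the strictly decreasing count of $B$-edges for termination, and observe that a transition graph with no $B$-edges is a singular variation graph. Your additional care about merges possibly coalescing further $B$-edges only sharpens the paper's one-line termination claim.
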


\begin{proof}
	Algorithm terminates, because
	Split operation is applied only once for each vertex and every execution of Merge or Collapse
	reduce the number of $B$-edges in the graph. 

	The transformations ensure that the final
	transition graph has no $B$-edges, 
	so it is in fact a singular variation graph.
	Lemmas \ref{lem_split}-\ref{lem_collapse} guarantee $k$-completeness and $k$-faithfulness.
\end{proof}

Theorems \ref{thm:equiv} and \ref{thm:build} can be summarized by the following statement.

\begin{corollary}
	Let $\mathcal{S}=\{S_1,\ldots,S_n\}$ be a set of strings
	such that $|S_i|\ge k$ for every $i\in\{1,\ldots,n\}$, 
	Then the $k$-complete and $k$-faithful variation graph representation 
	of $\mathcal{S}$ exists and is unique up to equivalence.
\end{corollary}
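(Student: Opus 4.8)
The plan is to assemble this corollary directly from the three main results already established, since it is essentially a bookkeeping combination of them rather than a fresh argument. First I would invoke the Proposition on de Bruijn graphs: because every $S_i$ has length at least $k$, there exists a de Bruijn graph of length $k$ representing $\mathcal{S}$ (unique up to isomorphism). This supplies the starting object for the construction and is the only place where the hypothesis $|S_i|\ge k$ is actually used.

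For the existence half of the statement, I would then feed that de Bruijn graph into the transformation algorithm of Section 3 and apply Theorem \ref{thm:build}: the algorithm terminates and outputs a singular variation graph representation of $\mathcal{S}$ that is both $k$-complete and $k$-faithful. This yields a witness, so existence is done.

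For uniqueness up to equivalence, I would appeal to Theorem \ref{thm:equiv}: any two variation graph representations of $\mathcal{S}$ that are $k$-complete and $k$-faithful reflect exactly the same $m$-mers for every $m>0$, which is precisely the definition of equivalence. Hence all $k$-complete and $k$-faithful variation graph representations of $\mathcal{S}$ lie in a single equivalence class, and combined with the previous paragraph this class is nonempty.

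I do not expect a genuine obstacle here; the only point requiring a moment's care is matching hypotheses to the cited statements — namely that $|S_i|\ge k$ is exactly what guarantees each $S_i$ is spelled by some path in a length-$k$ de Bruijn graph (a shorter string has no $k$-mer), and that Theorem \ref{thm:build} accepts an arbitrary dBG representation as input with no extra assumptions. Once these are noted, the proof is just a one-line chaining of the Proposition, Theorem \ref{thm:build}, and Theorem \ref{thm:equiv} in that order.
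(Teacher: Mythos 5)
Your proposal matches the paper exactly: the corollary is stated there as a direct summary of Theorem \ref{thm:equiv} and Theorem \ref{thm:build}, with the Proposition supplying the initial de Bruijn graph, which is precisely your chaining. Correct and essentially identical to the intended argument.
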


\section{Conclusion}
In the current paper we proposed the concepts of 
$k$-completeness and $k$-faithfulness,
which may be used to express the desirable properties
of a graph representation of a collection of strings.
We showed that these properties
are always satisfied in de Bruijn graphs 
and determine up to equivalence the structure of variation graphs. 
Furthermore, we showed the relationship between 
variation graphs satisfying the above conditions 
and de Bruijn graphs.
This relationship can be used not only to 
efficiently build a variation graph 
representing a given set of sequences,
but also to provide a direct method 
of transferring annotations between both pangenome models.

The proposed axiomatization 
may be further developed.
For example, one can formulate properties 
of a variation graph
that express desirable differences 
from the structure associated with the
corresponding de Bruijn graph.

\subsubsection{Acknowledgements} 
	This research was funded in whole by National Science Centre, Poland,
	grant no. 2022/47/B/ST6/03154.

\bibliographystyle{splncs04}
\bibliography{main}

\end{document}